\documentclass[12pt,a4paper]{article}

\usepackage[utf8]{inputenc}
\usepackage[english]{babel}
\usepackage{graphicx}
\usepackage{multirow}
\usepackage{caption}
\usepackage{cancel}
\usepackage{bm}
\usepackage{url}

\usepackage{amssymb,amsmath,amsthm}
\usepackage{vmargin}
\usepackage{natbib}

\bibliographystyle{authordate1}

\usepackage[usenames, dvipsnames]{color}
\usepackage{dsfont}
\usepackage{algorithm}
\usepackage{algpseudocode}
\usepackage{pifont}
\usepackage{algcompatible}
\usepackage{caption}
\usepackage{subcaption}

\newtheorem{prop}{Proposition}
\newtheorem{theorem}{Theorem}


\newcommand{\R}{\mathbb R}

\newcommand{\E}{\mathbb E}

\renewcommand{\P}{\mathbb P}
\renewcommand{\E}{\mathbb E}

\newcommand{\teta}{\mbox{\boldmath{$\theta$}}}
\newcommand{\Beta}{\mbox{\boldmath{$\beta$}}}
\newcommand{\gr}[1]{\textbf{#1}}


\setpapersize{A4}
\setmargins{2.5cm}{2cm}{16cm}{25cm}{0cm}{0cm}{1cm}{1cm}

\usepackage{sectsty}
\sectionfont{\large}

\providecommand{\keywords}[1]{\textit{Keywords:} #1}

\begin{document}

\title{A note on perfect simulation for exponential random graph models}

\author{A.~Cerqueira\footnote{Instituto de Matem\'atica e Estat\'\i stica, Universidade de S\~ao Paulo, Brazil}, A.~Garivier\footnote{Institut de Math\'ematiques de Toulouse, Universit\'e Paul Sabatier, Toulouse, France} and F.~Leonardi\footnotemark[1]}

\date{\today}

\maketitle

\begin{abstract}
In this paper we propose a perfect simulation algorithm for the Exponential Random Graph Model, based on the
Coupling From The Past method of \citet{propp1996exact}. We use a  Glauber dynamics to construct the Markov Chain and we prove the monotonicity of the ERGM for a subset of the parametric space. We also obtain an upper bound on the running time of the algorithm that depends on the mixing time of the Markov chain. 
\end{abstract}

\keywords{
exponential random graph, perfect simulation, coupling from the past, MCMC,  Glauber dynamics
}

\section{Introduction}
\label{sec:introdution}

In recent years there has been an increasing interest in the study of probabilistic models defined on graphs. One of the main reasons for this interest is the flexibility of these models, making them suitable for the description of real datasets, for instance in social networks \citep{newman2002random} and neural networks \citep{cerqueira2017test}. The interactions in a real network can be represented through a graph, where the interacting objects are represented by the vertices of the graph and the interactions between these objects are identified with the edges of the graph.

In this context,  the Exponential Random Graph Model (ERGM) has been vastly explored to model real networks \citep{robins2007introduction}. Inspired on the Gibbs distributions of Statistical Physics, one of the reasons of the popularity of ERGM can be attributed to the specific definition of the probability distribution that takes into account 
appealing properties of the graphs such as number of edges, triangles, stars, etc.  

In terms of statistical inference, learning the parameters of a ERGM is not a simple task. 
Classical inferential methods such as Monte Carlo maximum likelihood \citep{geyer1992constrained} or  pseudolikelihood  \citep{strauss1990pseudolikelihood} have been proposed, but exact calculation of the estimators is computationally 
infeasible except for small graphs. 
In the general case, estimation is made by means of a Markov Chain Monte Carlo (MCMC) procedure  \citep{snijders2002markov}. It consists in building a Markov chain whose stationary distribution is the target law. The classical approach is to simulate a ``long enough'' random path, so as to get an approximate sample. But this procedure suffers from a  ``burn-in'' time that is difficult to estimate in practice  and  can be exponentially long, as shown in \citet{bhamidi2011mixing}.
To overcome this difficulty, \citet{butts2015novel} proposed a novel algorithm called \emph{bound sampler}. The simulation time of the bound sampler is fixed and depends only on the number of vertices of the graph. Despite being approximate methods for sampling from the ERGM, the bound sampler is recommended when the mixing time of the chain is long  \citep{butts2015novel}.

In order to get a sample of the stationary distribution of the chain, an alternative approach introduced by \citet{propp1996exact} is  called \emph{Coupling From The Past} (CFTP). This method yields a value distributed under the \emph{exact} target distribution, and is thus called perfect (also known as exact) simulation. Like MCMC, it does not require computing explicitly the normalising constant; but it does not require any burn-in phase either, nor any detection of convergence to the target distribution.

In this work we present a  perfect simulation algorithm for the ERGM. In our approach, we adapt a version of  the CFTP algorithm, which was originally developed for spin systems \citep{propp1996exact} to construct a perfect simulation algorithm  using a Markov chain based on   the Glauber dynamics.  For a specific family of distributions 
based on simple statistics we prove that they satisfy the monotonicity property and then the CFTP algorithm is computationally efficient to generate a sample from the target distribution. Finally, and most importantly, we prove that the mean running time of the perfect simulation algorithm is at most as large (up to logarithmic terms) than the mixing time of the Markov chain. As a consequence, we argue CFTP is superior to  MCMC for the simulation of the ERGM.

\section{Definitions}
\label{sec:definitions}

\subsection{Exponential random graph model}
\label{sec:erg}

A graph is a pair $\gr{g}=(V,E)$, where $V=\{1,2,\dots,N\}$ is a finite set of vertices and $E\subset V\times V$ is a set of edges that connect pairs of vertices. In this work we consider undirected graphs; i.e, graphs for which the set $E$ satisfies the following properties:

\begin{enumerate}
\item For all $i \in V$, $(i,i) \notin E$
\item If $(i,j) \in E$, then $(j,i) \in E$.
\end{enumerate}

 Let $\mathcal{G}_N$ be the collection of all undirected graphs with set of vertices V.
A graph $(V,E) \in \mathcal{G}_N$ can be identified with a symmetric binary  matrix $\gr{x}\in\mathcal{M}_N\big(\{0,1\}\big)$ such that $x(i,j)=1$ if and only if $(i,j)\in E$.
 
We will thus denote by $\gr{x}$ a graph, which is completely determined by the values  $\gr{x}(i,j)$, for $1 \leq i < j \leq N$.
The \emph{empty graph} $\gr{x}^{(0)}$ is given by $x^{(0)}(i,j)=0$, for $1 \leq i < j \leq N$,  and the  the \emph{complete graph} $\gr{x}^{(1)}$ is given by $x^{(1)}(i,j)=1$, for $1 \leq i < j \leq N$.

  We write $\gr{x}_{-ij}$ to represent the set of values 
  $\{\gr{x}(l,k)\colon (i,j)\neq (l,k)\}$. 
 For a given graph $\gr{x}$ and a set of vertices $ U \subseteq V$, we denote by $\gr{x}(U)$ the subgraph induced by $U$, that is $\gr{x}(U)$ is a graph with set of vertices $U$ and such that $\gr{x}(U)(i,j)=\gr{x}(i,j)$ for all $i,j \in U$.
 
 In the ERGM model, the probability of selecting a graph in $\mathcal{G}_N$ is defined by its structure: its number of edges, of triangles, etc.
  
To define subgraphs counts in a graph, let $V_m$ be the set of all possible permutations of $m$ distinct elements of $V$. For $v_m \in V_m$, define $\bold{x}(v_m)$ as the subgraph of $\bold{x}$ induced by $v_m$. For a graph $\bold{g}$ with $m$ vertices, we say  $\bold{x}(v_m)$ \emph{contains} $\bold{g}$ (and we write $\bold{x}(v_m) \succeq \bold{g}$) if $\bold{g}(i,j)=1$ implies $\bold{x}(v_m)(i,j)=1$. Then,  for $\bold{x} \in \mathcal{G}_N$ and $\bold{g} \in \mathcal{G}_m$, $m \leq N$, we define the number of subgraphs $\bold{g}$ in $\bold{x}$ by the counter

\begin{equation}\label{eq:count}
N_{\bold{g}}(\bold{x}) = \sum\limits_{v_m \in V_m}\mathds{1}\{\bold{x}(v_m) \succeq \bold{g}\}\,.
\end{equation}

Let $\bold{g}_1, \dots, \bold{g}_s$ be fixed graphs, where $\bold{g}_i$ has $m_i$ vertices, $m_i \leq N$ and $\Beta\in\R^s$ a vector of parameters. By convention we set $\bold{g}_1$ as the graph with only two vertices and one edge. Following \cite{chatterjee2013estimating} and \cite{bhamidi2011mixing} we define the probability 
of graph $\bold{x}$ by
\begin{equation}\label{eq:expo_beta}
p_N(\bold{x}|\Beta)=\dfrac{1}{Z_N(\Beta)}\exp\left( \sum\limits_{i=1}^s\beta_i\dfrac{N_{g_i}(\bold{x})}{N^{m_i-2}}\right) \,.
\end{equation}

Observe that the set $\mathcal{G}_N$ is equipped with a partial order given by
\begin{equation}\label{order}
 \gr{x} \preceq \gr{y}\quad  \text{if and only if}\quad \gr{x}(i,j)=1\quad\text{implies}\quad\gr{y}(i,j)=1\,.
\end{equation}
Moreover, this partial ordering has a maximal element $\gr{x}^{(1)}$ (the complete graph) and a  minimal element $\gr{x}^{(0)}$ (the empty graph).
Considering this partial order, we say  $p_N(\cdot\,|\Beta)$ is \emph{monotone} if the conditional distribution of $\bold{x}(i,j)=1$ given $\bold{x}_{-ij}$ is a monotone increasing function. That is, $p_N(\cdot\,|\Beta)$ is monotone if, and only if $\gr{x} \preceq \gr{y}$ implies
 
 \begin{equation}\label{mono1}
p_N(\gr{x}(i,j)=1|\Beta,\bold{x}_{-ij}) \;\leq\; p_N(\gr{y}(i,j)=1|\Beta,\bold{y}_{-ij})\,.
\end{equation}

Let $\mathbf{E}=\{(i,j): i,j \in V \mbox{ and } i< j\}$ be the set of possible edges in a graph with set of vertices V. For any graph $\gr{x} \in \mathcal{G}_N$, a pair of vertices $(i,j) \in \mathbf{E}$ and $a \in  \{0,1\}$  we define the modified graph $\gr{x}_{ij}^a \in \mathcal{G}_N$  given by 
\[ 
\gr{x}^a_{ij}(k,l) = \begin{cases}
x(k,l)\,, & \text{ if }(k,l)\neq (i,j)\,;\\
a \,, & \text{ if }(k,l) = (i,j)\,.
\end{cases}
\]
Then, Inequality \eqref{mono1} is equivalent to 

\begin{equation}\label{mono2}
\dfrac{p_N(\gr{x}_{ij}^{1}\, |\Beta)}{p_N(\gr{x}_{ij}^{0}\, |\Beta)}\; \leq\; \dfrac{p_N(\gr{y}_{ij}^{1}\, |\Beta)}{p_N(\gr{y}_{ij}^{0}\, |\Beta)}\;.
\end{equation}
Moreover, if this inequality holds for all  $\gr{x} \preceq \gr{y}$ then the distribution is monotone. 

Monotonicity plays a very important role in the development of efficient perfect simulation algorithms. 
The following proposition gives a sufficient condition on the parameter vector $\Beta$ under which the corresponding ERGM distribution is monotone.

\begin{prop}\label{prop:monotone}
Consider the ERGM  given by \eqref{eq:expo_beta}. 
If 
$\beta_i \geq 0$  for  all $i\geq 2$ then the distribution 
$p_N(\cdot\, ;\Beta)$ is monotone.
\end{prop}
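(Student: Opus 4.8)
The plan is to verify the equivalent condition \eqref{mono2} directly. Fix $\gr x\preceq\gr y$ in $\mathcal G_N$ and $(i,j)\in\mathbf E$. Since the normalising constant $Z_N(\Beta)$ cancels in each ratio, \eqref{mono2} is equivalent to
\begin{equation*}
\sum_{k=1}^s\beta_k\,\frac{N_{g_k}(\gr x_{ij}^{1})-N_{g_k}(\gr x_{ij}^{0})}{N^{m_k-2}}\;\le\;\sum_{k=1}^s\beta_k\,\frac{N_{g_k}(\gr y_{ij}^{1})-N_{g_k}(\gr y_{ij}^{0})}{N^{m_k-2}}\,,
\end{equation*}
so it suffices to understand, for each fixed subgraph $\gr g=\gr g_k$ on $m$ vertices, the increment $\Delta_{\gr g}(\gr x):=N_{\gr g}(\gr x_{ij}^{1})-N_{\gr g}(\gr x_{ij}^{0})$.

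First I would give a combinatorial description of $\Delta_{\gr g}$. Since $\gr x_{ij}^{1}$ and $\gr x_{ij}^{0}$ differ only in the entry $(i,j)$, a permutation $v_m\in V_m$ can change the value of the indicator in \eqref{eq:count} only if it assigns $i$ and $j$ to two coordinates $p,q$ with $\gr g(p,q)=1$; for any other placement (including placing at most one of $i,j$ among the selected vertices, or placing both on a non-edge of $\gr g$) the restrictions of $\gr x_{ij}^{1}(v_m)$ and $\gr x_{ij}^{0}(v_m)$ to the edge set of $\gr g$ agree, so the two indicators are equal. For such a contributing $v_m$ one has $\gr x_{ij}^{0}(v_m)\not\succeq\gr g$ (its $(p,q)$ entry is $0$ while $\gr g(p,q)=1$), and $\gr x_{ij}^{1}(v_m)\succeq\gr g$ exactly when $\gr x(v_m)$ already contains every edge of $\gr g$ other than $(p,q)$. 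Hence $\Delta_{\gr g}(\gr x)$ equals the number of $v_m\in V_m$ with $v_m(p)=i$, $v_m(q)=j$, $\gr g(p,q)=1$, and $\gr x\bigl(v_m(a),v_m(b)\bigr)=1$ for every edge $(a,b)\neq(p,q)$ of $\gr g$; in particular $\Delta_{\gr g}(\gr x)\ge0$.

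The key step is the monotonicity of $\Delta_{\gr g}$. If $\gr x\preceq\gr y$, then by \eqref{order} every entry of $\gr x$ equal to $1$ is also equal to $1$ in $\gr y$; therefore every permutation $v_m$ counted in $\Delta_{\gr g}(\gr x)$ is also counted in $\Delta_{\gr g}(\gr y)$, which gives $\Delta_{\gr g}(\gr x)\le\Delta_{\gr g}(\gr y)$. For $k=1$ the graph $\gr g_1$ is a single edge on two vertices, so the only contributing permutations are the two orderings of $\{i,j\}$ and $\Delta_{\gr g_1}(\gr x)=2$ irrespective of $\gr x$; thus the $\beta_1$-terms are equal on both sides and cancel, so no constraint on $\beta_1$ is needed. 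For $k\ge2$ the hypothesis $\beta_k\ge0$, together with $\Delta_{\gr g_k}(\gr x)\le\Delta_{\gr g_k}(\gr y)$ and $N^{m_k-2}>0$, yields $\beta_k\,\Delta_{\gr g_k}(\gr x)/N^{m_k-2}\le\beta_k\,\Delta_{\gr g_k}(\gr y)/N^{m_k-2}$. Summing over $k$ gives the displayed inequality, i.e. \eqref{mono2}; since $\gr x\preceq\gr y$ and $(i,j)$ were arbitrary, $p_N(\cdot\,|\Beta)$ is monotone.

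I expect the only real care to be in the combinatorial identification of $\Delta_{\gr g}$ — in particular checking that permutations which place $i,j$ on a non-edge of $\gr g$, or which include at most one of $i,j$ among the $m$ chosen vertices, contribute $0$ to the difference. Once that is settled, the monotonicity of $\Delta_{\gr g}$ is immediate from the definition of the partial order $\preceq$, and everything else reduces to nonnegativity of the coefficients $\beta_k$, $k\ge2$.
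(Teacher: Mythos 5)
Your proposal is correct and follows essentially the same route as the paper: reduce \eqref{mono2} to comparing, for each $k$, the increments $N_{g_k}(\cdot_{ij}^{1})-N_{g_k}(\cdot_{ij}^{0})$, identify them combinatorially as counts of permutations involving $i$ and $j$, and use monotonicity of these counts in the graph together with $\beta_k\ge 0$ for $k\ge 2$ (the $\beta_1$ term being constant). In fact your identification of the increment is slightly more careful than the paper's, since you restrict to placements of $i,j$ on an \emph{edge} of $\gr{g}_k$, which is exactly what makes the $\gr{x}_{ij}^{0}$-indicator vanish and the increment monotone.
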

The short proof of Proposition~\ref{prop:monotone} is given in Section~\ref{sec:proofs}.

\section{Coupling From The Past}

In this section we recall the construction of a Markov chain with stationary distribution $p_N(\cdot\, ;\Beta)$ using a local update algorithm called \emph{Glauber dynamics}. This construction and the monotonicity property given by Proposition~\ref{prop:monotone} are the keys to develop a perfect simulation algorithms for the ERGM.

\subsection{Glauber dynamics}
\label{sec:glauber}

In the Glauber dynamics the probability of a transition from graph $\gr{x}$ to  graph $\gr{x}^a_{ij}$ is given by 
\begin{equation}\label{eq:Glauber}
p(\gr{x}_{ij}=a| \Beta, \gr{x}_{-ij})=\dfrac{p_N(\gr{x}^a_{ij}|\Beta)}{p_N(\gr{x}^0_{ij}|\Beta)+p_N(\gr{x}^1_{ij}|\Beta)}
\end{equation}
and any other (\emph{non-local}) transition is forbidden. 

For the ERGM, this can be written
\begin{equation}\label{eq:Glauber_erg}
    p(\gr{x}_{ij}=1| \Beta, \gr{x}_{-ij}) = \frac{1}{1+ \exp \left\lbrace -2\beta_1+ \sum\limits_{k=2}^s \dfrac{\beta_k}{N^{m_k-2}} \left( N_{g_k}(\gr{x}^0_{ij})- N_{g_k}(\gr{x}^1_{ij})\right)     \right\rbrace}
 \end{equation}   
    and
    \[
       p(\gr{x}_{ij}=0| \Beta, \gr{x}_{-ij}) = 1 - p(\gr{x}_{ij}=1| \Beta, \gr{x}_{-ij})\,.
\]
It can be seen easilty that the Markov chain with the Glauber dynamics~\eqref{eq:Glauber_erg} has an invariant distribution equal to $p_N(\cdot\, ;\Beta)$. It can be implemented with the help of the local update function $\phi: \mathcal{G}_N\times \mathbf{E}\times[0,1] \rightarrow \mathcal{G}_N$ defined as

\begin{equation}\label{eq:update}
\phi(\gr{x},(i,j),u)=\begin{cases}
\gr{x}^{0}_{ij} , & \mbox{if  }  u \leq p(\gr{x}_{ij}=0| \Beta, \gr{x}_{-ij})\,;\\
\gr{x}^{1}_{ij}, &   \mbox{otherwise.} 
\end{cases}
\end{equation}
A transition in the Markov chain from graph $\gr{x}$ to graph $\gr{y}$ is obtained by the random function
\[
\gr{y} \,=\,\phi(\gr{x},\bold{e},\bold{u})\,,\quad \text{ with } \bold{e}\sim \text{Uniform}(\mathbf{E}),\; \bold{u} \sim \text{Uniform}(0,1)\,.
\]

For random vectors  $\bold{e}_{-n}^{-1}=(\mathrm{e}_{-n},\mathrm{e}_{-n+1},\dots,\mathrm{e}_{-1})$ and $\bold{u}_{-n}^{-1}=(u_{-n},u_{-n+1},$ $\dots,u_{-1})$, with $\mathrm{e}_{-k}\in \mathbf{E}$ and $u_{-k}\in [0,1]$, we define the random map  
$F_{-n}^0$
given by the following induction:
\begin{align*}
F_{-1}^0(\bold{x},\bold{e}_{-1}^{-1},\bold{u}_{-1}^{-1}) & = \phi(\gr{x},e_{-1},u_{-1})\,,\hbox{and}\\
F_{-n}^0(\bold{x},\bold{e}_{-n}^{-1},\bold{u}_{-n}^{-1}) &=F_{-n+1}^0(F_{-1}^0(\bold{x},\bold{e}_{-n}^{-n},\bold{u}_{-n}^{-n}),\bold{e}_{-n+1}^{-1},\bold{u}_{-n+1}^{-1})\,, \;\text{ for }n\geq 2\,. 
\end{align*}

The CFTP protocol of~\citep{propp1996exact} relies on the following elementary observation.
If, for some time $-n$ and for some uniformly distributed  $\bold{e}_{-n}^{-1}=(\mathrm{e}_{-1}\dots,\mathrm{e}_{-n})$ and $\bold{u}_{-n}^{-1}=(u_{-1},\dots,u_{-n})$, the mapping $F_{-n}^0(\cdot,\bold{e}_{-n}^{-1},\bold{u}_{-n}^{-1})$ is \emph{constant}, that is 
if it takes the same value at time $0$ on all possible graphs $\gr{x}$, then this constant value is readily seen to be a sample from the invariant measure of the Markov chain. Moreover, if $F_{-n}$ is constant for some positive $n$, then it is also constant for all larger values of $n$. 
Thus, it suffices to find a value of $n$ large enough so as to obtain a constant map, and to return the constant value of this map. For example, one may try some arbitrary value $n_1$, check if the map $F_{-n_1}^0(\cdot,\bold{e}_{-n_1}^{-1},\bold{u}_{-n_1}^{-1})$ is constant, try $n_2=2n_1$ otherwise, and so on...

Since the state space has the huge size $|\mathcal{G}_N|=2^{\frac{N(N-1)}{2}}$, it would be computationally intractable to compute the value of
$F_{-n}^0(\bold{x},\bold{e}_{-n}^{-1},\bold{u}_{-n}^{-1})$ for all $\bold{x} \in \mathcal{G}_N$ in order 
to determine if they coincide.  

This is where the monotonicity property helps: it is sufficient to inspect the  
 maximal and minimal elements of $\mathcal{G}_N$. If they  both yield the same value, then all other initial states will also coincide with them, and the mapping will be constant. 

To apply the monotone CFTP algorithm, we use the partial order defined by \eqref{order} and the extremal elements  $\gr{x}^{(0)}$ and  $\gr{x}^{(1)}$.
The resulting procedure, which simulates the ERGM using the CFTP protocol, is described in Algorithm 1.

\begin{algorithm}[th]
\caption{CFTP for ERGM}
\begin{algorithmic}[1]
\State Input: $\mathbf{E}$, $\phi$
\State Output: $\gr{x}$
\State $upper(i,j) \gets 1$, for all $(i,j)\in \mathbf{E}$
\State $lower(i,j) \gets 0$, for all $(i,j)\in \mathbf{E}$
\State $n \gets 0$
\WHILE{upper $\neq$ lower} 
\State $n \gets n+1$
\State $upper(i,j) \gets 1$, for all $(i,j)\in \mathbf{E}$
\State $lower(i,j) \gets 0$, for all $(i,j)\in \mathbf{E}$
\State Choose a pair of vertices $\mathrm{e}_{-n}$ randomly on $\mathbf{E}$
\State Simulate $u_{-n}$ with distribution uniform on $[0,1]$
\State upper $\gets F_{-n}^0 (upper,\bold{e}_{-n}^{-1} ,\bold{u}_{-n}^{-1})$
\State lower $\gets F_{-n}^0 (lower,\bold{e}_{-n}^{-1} ,\bold{u}_{-n}^{-1})$
\ENDWHILE
\State $\gr{x} \gets upper$\\
\Return Return: $n$, $\gr{x}$
\end{algorithmic}
\end{algorithm}

Let 
\[
T^{\text{stop}}_N=\min\{n >0 : F_{-n}^0(\gr{x}^{(0)},\bold{e}^{-1}_{-n},\bold{u}^{-1}_{-n})=F_{-n}^0(\gr{x}^{(1)},\bold{e}^{-1}_{-n},\bold{u}^{-1}_{-n})\}
\]
be the stopping time of  Algorithm 1. Proposition~\ref{the:dist_exp} below guarantees that the law of the output of Algorithm 1 is the target distribution  $p_N(\cdot;\Beta)$. The proof of Proposition~\ref{the:dist_exp}  follows directly from Theorem~2 in \citet{propp1996exact} and is omitted here.

\begin{prop}\label{the:dist_exp}
Suppose that  $\P(T^{\text{stop}}_N < \infty)=1$. Then the graph $\bold{x}$  returned by  Algorithm 1 has law $p_N(\cdot;\Beta)$.
\end{prop}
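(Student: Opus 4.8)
The plan is to follow the classical Coupling From The Past argument of \citet{propp1996exact}, specialized to our Glauber chain. Recall that this chain is irreducible and aperiodic on the finite set $\mathcal{G}_N$ and admits $\mu:=p_N(\cdot\,;\Beta)$ as its unique invariant distribution, and that we work in the monotone regime of Proposition~\ref{prop:monotone}, which is what makes the two-trajectory bookkeeping of Algorithm~1 sound. Fix once and for all a source of randomness $(\mathrm{e}_{-n},u_{-n})_{n\geq 1}$, with the $\mathrm{e}_{-n}$ i.i.d.\ uniform on $\mathbf{E}$ and the $u_{-n}$ i.i.d.\ uniform on $[0,1]$; all the maps $F_{-n}^0$ are built from this single sequence, which is exactly how Algorithm~1 reuses the previously drawn values each time it increments $n$. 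The first step is to record that the local update is order-preserving: if $\gr{x}\preceq\gr{y}$ then $\phi(\gr{x},(i,j),u)\preceq\phi(\gr{y},(i,j),u)$, because by monotonicity the threshold $p(\gr{x}_{ij}=0\,|\,\Beta,\gr{x}_{-ij})$ is nonincreasing in $\gr{x}$, so the only potentially order-violating case (namely $\phi(\gr{x},(i,j),u)=\gr{x}^1_{ij}$ while $\phi(\gr{y},(i,j),u)=\gr{y}^0_{ij}$) cannot occur. Composing order-preserving maps, $F_{-n}^0$ is order-preserving for every $n$, and since $\gr{x}^{(0)}\preceq\gr{x}\preceq\gr{x}^{(1)}$ for all $\gr{x}$, the map $F_{-n}^0$ is constant on $\mathcal{G}_N$ if and only if $F_{-n}^0(\gr{x}^{(0)})=F_{-n}^0(\gr{x}^{(1)})$. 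Hence $T^{\text{stop}}_N$ coincides with the first $n$ at which $F_{-n}^0$ is constant, which is precisely what the \textbf{while} loop detects.

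Second, I would check that the returned value is well defined. Writing $F_{-m}^{-n}$ for the composition of the $\phi$-updates that run the chain from time $-m$ to time $-n$, one has the identity $F_{-m}^0=F_{-n}^0\circ F_{-m}^{-n}$ for all $m\geq n$; consequently, if $F_{-n}^0$ is constant then so is $F_{-m}^0$ for every $m\geq n$, with the \emph{same} constant value (this is the ``once coalesced, stays coalesced'' remark made just before Algorithm~1). Therefore, on the event $\{T^{\text{stop}}_N<\infty\}$, which has probability $1$ by assumption, the graph $\gr{X}:=F_{-n}^0(\gr{x}^{(0)})$ does not depend on the choice of $n\geq T^{\text{stop}}_N$, equals $F_{-n}^0(\gr{x})$ for every starting graph $\gr{x}$, and is the output of Algorithm~1.

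Third, the coupling argument that pins down the law of $\gr{X}$. Fix $n\geq 1$, let $\gr{Z}_{-n}\sim\mu$ be independent of the randomness, and set $\gr{Z}_0^{(n)}:=F_{-n}^0(\gr{Z}_{-n})$; by invariance of $\mu$ under the Glauber transition, $\gr{Z}_0^{(n)}\sim\mu$. On $\{T^{\text{stop}}_N\leq n\}$ the map $F_{-n}^0$ is constant, hence $\gr{Z}_0^{(n)}=\gr{X}$ there, so that for every $\gr{g}\in\mathcal{G}_N$,
\[
\bigl|\,\P(\gr{X}=\gr{g})-\mu(\gr{g})\,\bigr|
=\bigl|\,\P(\gr{X}=\gr{g})-\P(\gr{Z}_0^{(n)}=\gr{g})\,\bigr|
\leq \P\bigl(\gr{X}\neq\gr{Z}_0^{(n)}\bigr)
\leq \P\bigl(T^{\text{stop}}_N>n\bigr)\,.
\]
Letting $n\to\infty$ and invoking $\P(T^{\text{stop}}_N<\infty)=1$ yields $\P(\gr{X}=\gr{g})=\mu(\gr{g})$ for every $\gr{g}$, i.e.\ $\gr{X}$ has law $p_N(\cdot\,;\Beta)$.

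The remaining points are routine: measurability (each $F_{-n}^0$ depends on finitely many coordinates of the randomness, so all the events above are well defined) and the bookkeeping that the ``from the past'' reuse of the pairs $(\mathrm{e}_{-k},u_{-k})$ in Algorithm~1 matches the coupled construction above. The one genuinely delicate point is the well-definedness step of the second paragraph: coalescence must be detected by looking from a \emph{fixed} present time $0$ into an ever-deeper past, never by restarting the dynamics forward, since a forward restart would destroy the identity $F_{-m}^0=F_{-n}^0\circ F_{-m}^{-n}$ on which the stationarity of $\gr{X}$ rests.
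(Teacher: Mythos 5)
Your argument is correct, and it is essentially the same approach the paper takes: the paper omits the proof and invokes Theorem~2 of \citet{propp1996exact}, whose standard monotone-CFTP argument (order-preserving updates, the identity $F_{-m}^0=F_{-n}^0\circ F_{-m}^{-n}$ giving well-definedness of the coalesced value, and the coupling with a stationary start at time $-n$ followed by $n\to\infty$) is exactly what you have written out in full.
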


\section{Convergence Speed: Why CFTP is Better}
\label{sec:convergence_rates}

By the construction of the perfect simulation algorithm it is expected that the stopping time $T^{\text{stop}}_N$ is related to the mixing time of the chain. In Theorem~\ref{prop:CFTP} below we provide an upper bound on this quantity.

Given vectors $\bold{e}_{1}^{n}=(\mathrm{e}_{1},\mathrm{e}_{2},\dots,\mathrm{e}_{n})$ and $\bold{u}_{1}^{n}=(u_{1},u_{2}\dots,u_{n})$ define the forward random map $F_0^{n}(\gr{x},\bold{e}_{1}^{n},\bold{u}_{1}^{n})$ as
\begin{align}\label{forward-map}
F_{0}^1(\bold{x},\bold{e}_{1}^{1},\bold{u}_{1}^{1}) & = \phi(\gr{x},e_{1},u_{1})\,;\notag\\
F_{0}^n(\bold{x},\bold{e}_{1}^{n},\bold{u}_{1}^{n}) &=F_{0}^1(F_{0}^{n-1}(\bold{x},\bold{e}_{1}^{n-1},\bold{u}_{1}^{n-1}),\bold{e}_{n}^{n},\bold{u}_{n}^{n})\,, \;\text{ for }n\geq 2\,. 
\end{align}
For $x\in \mathcal{G}_N$, let the Markov chain $\{Y_n^{\bold{x}}\}_{n\in\mathbb N}$ be given by
\begin{align*}
Y_0^\bold{x} &= \bold{x}\\
Y_n^\bold{x} & = F_0^n (\bold{x},\bold{e}_{1}^{n},\bold{u}_{1}^{n})\,,\quad n\geq 1
\end{align*}
and denote by  $\mathrm{p}_n^{\bold{x}}$  the distribution of this Markov chain 
 at time $n$. Now define 
\[
\overline{d}(n)=\max\limits_{\bold{x},\bold{z} \in \mathcal{G}_N}\parallel \mathrm{p}_n^{\bold{x}}-\mathrm{p}_n^{\bold{z}} \parallel_{\text{TV}}\,,
\]
where $\|\cdot\|_{\text{TV}}$ stands for the total variation distance. 

Let $T_N^{\text{mix}}$ be given by 
\begin{equation}\label{tmix}
T_N^{\text{mix}}= \min\{\,n > 0 : \overline{d}(n) \leq \frac{1}{e}\, \}\,.
\end{equation}

The following theorem shows that the expected value of $T_N^\text{stop}$ is upper bounded by $T_N^\text{mix}$, up to a explicit multiplicative constant that depends on the number of vertices of the graph.

\begin{theorem}\label{prop:CFTP} Let $T_N^{\rm{mix}}$ be the mixing time defined by \eqref{tmix} for the forward map  
\eqref{forward-map}. Then 
\begin{equation}
\E[T^{\rm{stop}}_N] \;\leq\; 2\bigl( \log\bigl(  \tfrac{N(N-1)}{2}\bigr)+1\bigr)\,T_N^{\rm{mix}}\,.
\end{equation}
\end{theorem}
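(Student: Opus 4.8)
The plan is to rewrite the backward coalescence time $T^{\text{stop}}_N$ as a forward coalescence probability, to bound that probability by the mixing quantity $\overline{d}$ using the edge-count statistic, and finally to sum a geometric-type tail. Throughout, set $M=\tfrac{N(N-1)}{2}$, the number of possible edges.

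First I would observe that the updates $(\mathrm{e}_{-k},u_{-k})$ are i.i.d.\ and that $F_{-n}^0$ is a composition of $n$ i.i.d.\ copies of the elementary update $\phi(\cdot,\mathrm{e},u)$, in the same functional order as the forward map $F_0^n$; hence $F_{-n}^0$ and $F_0^n$ have the same distribution as random maps on $\mathcal{G}_N$, and in particular $\bigl(F_{-n}^0(\gr{x}^{(0)}),F_{-n}^0(\gr{x}^{(1)})\bigr)$ has the same law as $\bigl(Y_n^{\gr{x}^{(0)}},Y_n^{\gr{x}^{(1)}}\bigr)$. Next, Proposition~\ref{prop:monotone} makes the elementary update monotone for the order \eqref{order} (same $u$ and $\gr{x}\preceq\gr{y}$ force $\phi(\gr{x},(i,j),u)\preceq\phi(\gr{y},(i,j),u)$, through \eqref{eq:update}), so every trajectory is squeezed between the two extremal ones and $F_{-n}^0$ is constant exactly when $F_{-n}^0(\gr{x}^{(0)})=F_{-n}^0(\gr{x}^{(1)})$. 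Together with the already noted monotonicity in $n$ of the coalescence event, this gives
\[
\P\bigl(T^{\text{stop}}_N>n\bigr)=\P\bigl(F_{-n}^0(\gr{x}^{(0)})\neq F_{-n}^0(\gr{x}^{(1)})\bigr)=\P\bigl(Y_n^{\gr{x}^{(0)}}\neq Y_n^{\gr{x}^{(1)}}\bigr),
\]
a quantity nonincreasing in $n$ (coalesced forward trajectories stay coalesced), so that $\E[T^{\text{stop}}_N]=\sum_{n\ge0}\P\bigl(Y_n^{\gr{x}^{(0)}}\neq Y_n^{\gr{x}^{(1)}}\bigr)$.

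The crucial step is to bound this coalescence probability by $\overline{d}$. Let $h(\gr{x})=\sum_{1\le i<j\le N}\gr{x}(i,j)$ count the edges, so $0\le h\le M$. Monotonicity of the coupling gives $Y_n^{\gr{x}^{(0)}}\preceq Y_n^{\gr{x}^{(1)}}$ pathwise, hence whenever the two graphs differ there is an edge in the larger but not the smaller, which yields the pointwise inequality $\1\{Y_n^{\gr{x}^{(0)}}\neq Y_n^{\gr{x}^{(1)}}\}\le h(Y_n^{\gr{x}^{(1)}})-h(Y_n^{\gr{x}^{(0)}})$. Taking expectations, and using $|\E_\mu[h]-\E_\nu[h]|\le(\max h-\min h)\,\|\mu-\nu\|_{\text{TV}}$ for any two laws $\mu,\nu$ together with $Y_n^{\gr{x}^{(a)}}\sim\mathrm{p}_n^{\gr{x}^{(a)}}$, I get
\[
\P\bigl(Y_n^{\gr{x}^{(0)}}\neq Y_n^{\gr{x}^{(1)}}\bigr)\le\E\bigl[h(Y_n^{\gr{x}^{(1)}})\bigr]-\E\bigl[h(Y_n^{\gr{x}^{(0)}})\bigr]\le M\,\bigl\|\mathrm{p}_n^{\gr{x}^{(1)}}-\mathrm{p}_n^{\gr{x}^{(0)}}\bigr\|_{\text{TV}}\le M\,\overline{d}(n).
\]

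Finally I would use the classical submultiplicativity $\overline{d}(m+n)\le\overline{d}(m)\,\overline{d}(n)$, which with $\overline{d}(T_N^{\text{mix}})\le 1/e$ gives $\overline{d}(kT_N^{\text{mix}})\le e^{-k}$ for every integer $k\ge1$; hence $\P(T^{\text{stop}}_N>kT_N^{\text{mix}})\le\min\{1,Me^{-k}\}$. Since $n\mapsto\P(T^{\text{stop}}_N>n)$ is nonincreasing, grouping the sum in blocks of length $T_N^{\text{mix}}$ gives
\[
\E[T^{\text{stop}}_N]=\sum_{n\ge0}\P(T^{\text{stop}}_N>n)\le T_N^{\text{mix}}\sum_{k\ge0}\P(T^{\text{stop}}_N>kT_N^{\text{mix}})\le T_N^{\text{mix}}\sum_{k\ge0}\min\{1,Me^{-k}\},
\]
and it only remains to check $\sum_{k\ge0}\min\{1,Me^{-k}\}\le 2(\log M+1)$: split at $k_0=\lceil\log M\rceil$, note the first $k_0\le\log M+1$ terms are each at most $1$, and bound the tail by $\sum_{k\ge k_0}Me^{-k}\le Me^{-k_0}\tfrac{e}{e-1}\le\tfrac{e}{e-1}<2$ (the small case $N=2$ being immediate). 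The one genuinely delicate point is the key estimate of the third paragraph: one needs a bounded, monotone statistic — the edge count does the job — that at once dominates the coalescence indicator and converts the gap between the two extremal trajectories into a total-variation distance. The time-reversal identity, the submultiplicativity of $\overline{d}$, and the closing series bound are all routine.
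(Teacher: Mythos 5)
Your proof is correct, and its core is the same as the paper's: you reverse time to work with the forward coalescence time, run the monotone coupling of the two extremal chains, and bound the non-coalescence probability at time $n$ by $\tfrac{N(N-1)}{2}\,\overline{d}(n)$ via a bounded monotone statistic (your edge count $h$ is exactly the paper's ``length of the longest increasing chain'' $l$, so the key estimate $\P(Y_n^{\gr{x}^{(0)}}\neq Y_n^{\gr{x}^{(1)}})\le M\,\overline{d}(n)$ is identical). Where you diverge is the finishing step: the paper invokes the submultiplicativity of the coalescence tail $\P(\widetilde{T}^{\text{stop}}_N>n)$ (Theorem~6 of Propp--Wilson) to get $\E[\widetilde{T}^{\text{stop}}_N]\le n/\P(\widetilde{T}^{\text{stop}}_N\le n)$ and then makes the single choice $n=(\log M+1)T_N^{\rm mix}$, whereas you sum the tail directly in blocks of length $T_N^{\rm mix}$, using only the submultiplicativity of $\overline{d}$ (Lemma~4.12 of Levin--Peres) to get $\P(T^{\text{stop}}_N>kT_N^{\rm mix})\le\min\{1,Me^{-k}\}$ and then bounding $\sum_k\min\{1,Me^{-k}\}\le 2(\log M+1)$. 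Your route needs one fewer external ingredient (no submultiplicativity of the stopping-time tail) and makes the monotonicity of the update $\phi$ explicit, which the paper only uses implicitly; the small price is the bookkeeping at the end (the $\lceil\log M\rceil$ split and the separate remark for $N=2$), and also note the paper's choice of $n$ should in principle be rounded to an integer, an issue your block argument sidesteps entirely. Both arguments give the same constant $2$, so this is a clean, slightly more self-contained variant of the paper's proof rather than a genuinely different strategy.
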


The proof of this relation is inspired from an argument given in~\cite{propp1996exact}, and is presented in Section~\ref{sec:proofs}.  
The upper bound in Theorem~\ref{prop:CFTP} 
could be also combined with other results 
on the mixing time of the chain to obtain an estimate of the expected value of $T^{\rm{stop}}_N$, when they are available.  As an example, we cite 
the results obtained by 

 \citet{bhamidi2011mixing}, where they present a study of the mixing time of the Markov chain constructed using the Glauber dynamics for ERGM
as we consider here. 

They show that for models where $\Beta$ belongs to the high temperature regime  the mixing time of the chain is $\Theta(N^2\log N)$. On the other hand, for models under the low temperature regime, the mixing time is exponentially slow.

Observe that the mixing time $T_N^{\rm{mix}}$ is directly related to the ``burn-in'' time in MCMC, where the non-stationary forward Markov chain defined by \eqref{forward-map} approaches the invariant distribution. This can be observed in practice in 
Figure~\ref{fig:plot_edges_stars_80}: the convergence time for both MCMC and CFTP are of the same order of magnitude, as suggested by Theorem~1. 
In this figure we compare some statistics 
of the graphs obtained by MCMC at different sample sizes of the forward Markov chain and by CFTP at its convergence time. The model is defined by \eqref{eq:expo_beta}, where $g_2$ is chosen as the graph with 3 vertices and 2 edges, with parameters 
$(\beta_1,\beta_2)=(-1.1,0.4)$,   
 and  the mean proportion of edges in the graphs is approximately $p=0.86$. 
  We observe in Figure~\ref{fig:01} that MCMC remains a long time strongly dependent on the initial value of the chain when this value is  chosen in a region of low probability. On the other hand, when the initial value is chosen appropriately, then MCMC and CFTP have comparable performances, as shown in Figure~\ref{fig:08}.

\begin{figure}[t!]
    \centering
    \begin{subfigure}[b]{1\textwidth}
    \centering
        \includegraphics[scale=0.5,trim=50 0 50 50]{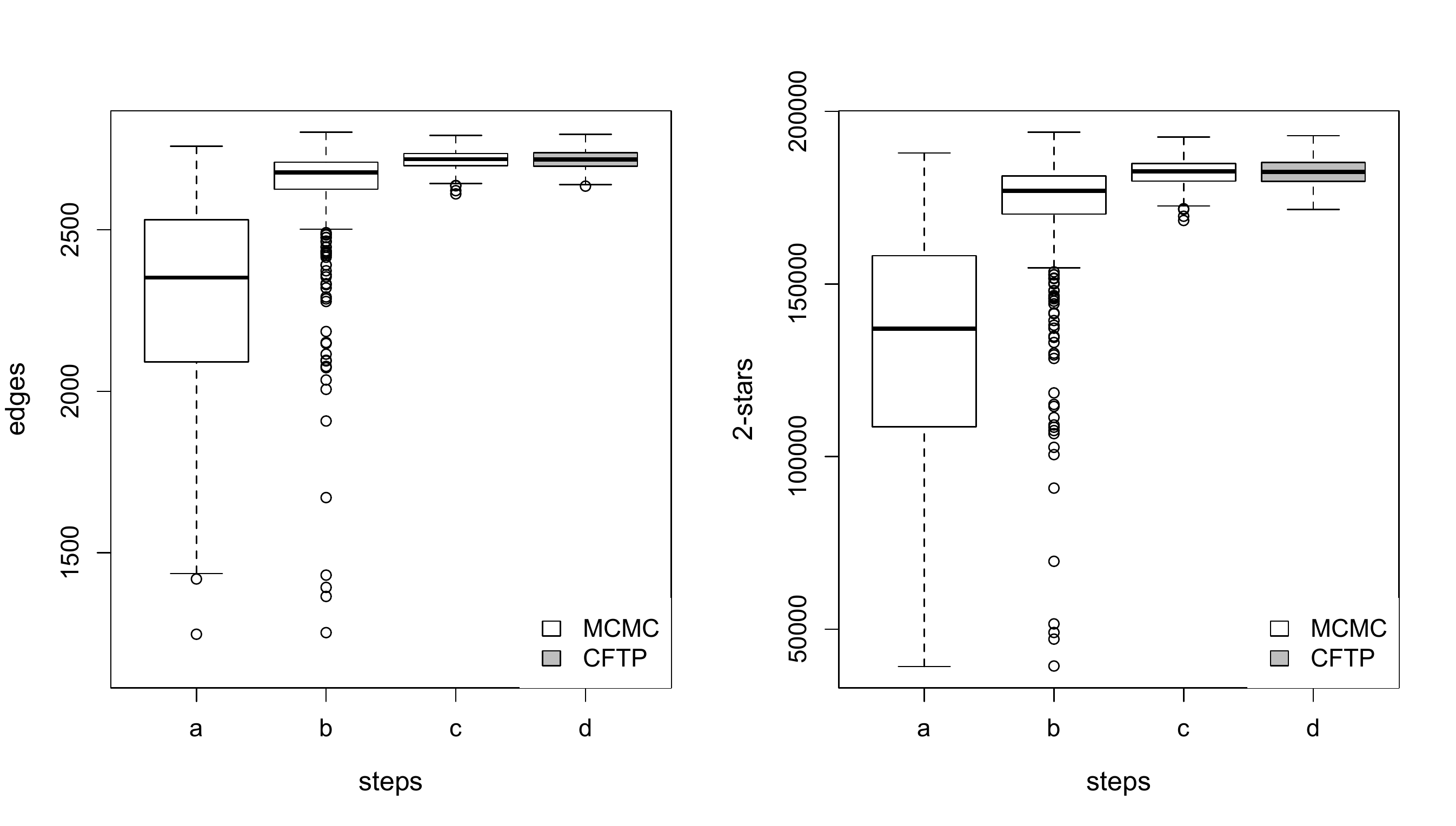}
        \caption{Initial state for MCMC chosen as Erd\H os-R\'enyi graph with parameter $0.1$.}
        \label{fig:01}
    \end{subfigure}
    \begin{subfigure}[b]{1\textwidth}
    \centering
    	\includegraphics[scale=0.5,trim=50 0 50 20]{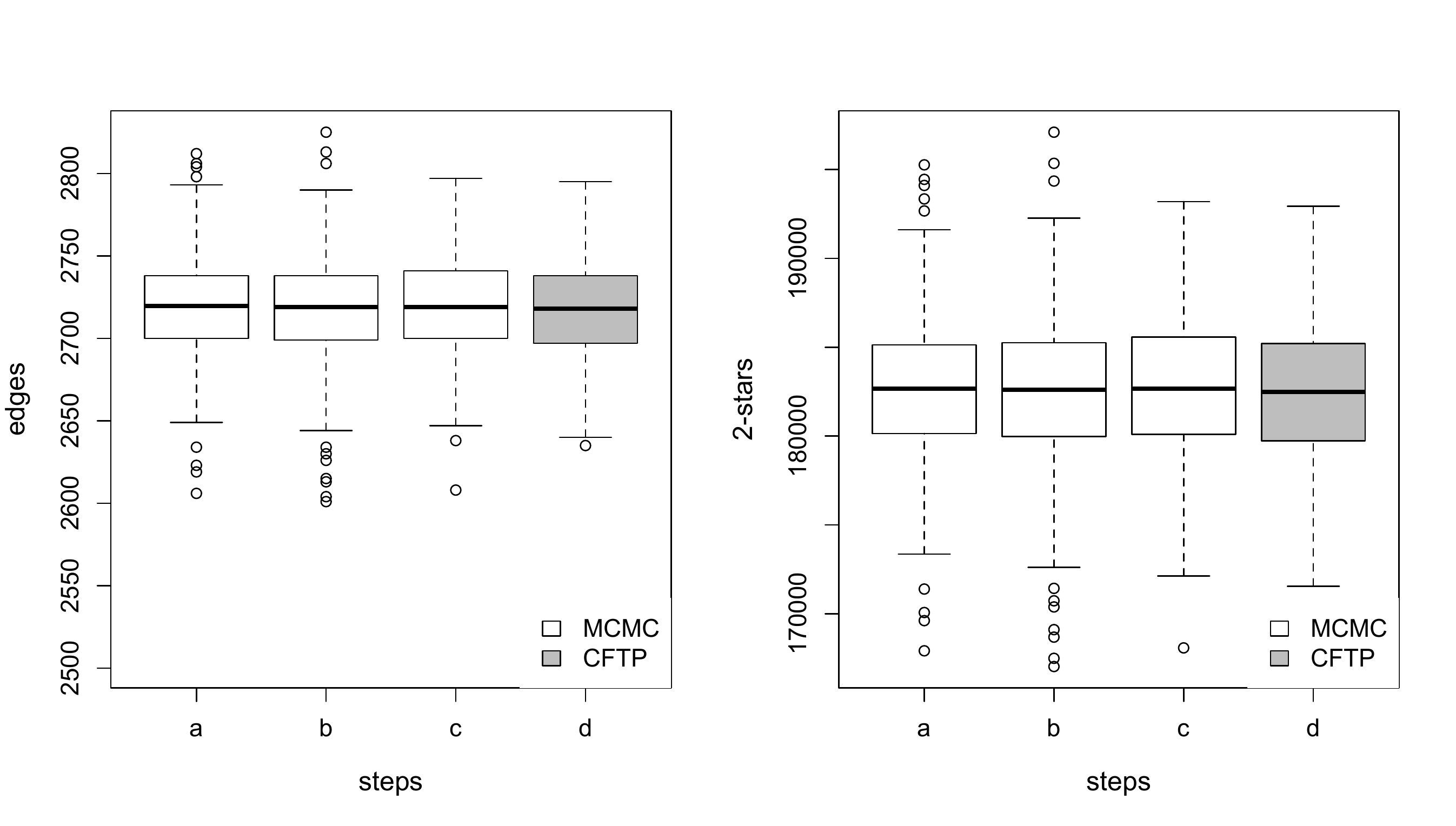}
        \caption{\small Initial state for MCMC chosen as Erd\H os-R\'enyi graph with parameter $0.8$.}
                \label{fig:08}
    \end{subfigure}
       \caption{Boxplots of number of edges (left) and number of 2-stars (right) for the ERGM with $N=80$ and parameter $(\beta_1,\beta_2)=(-1.1,0.4)$.  
    The MCMC algorithm was run for  $n= 80.000$ time steps (respectively $n=100.000$ (b) and $n=130.000$ (c)) and two different initial states. The results for the CFTP algorithm at the  time of convergence are shown in (d), with mean 130.500 time steps.}\label{fig:plot_edges_stars_80}
\end{figure}

\newpage
 
\section{Discussion}

In this paper we proposed a perfect simulation algorithm for the Exponential Random Graph Model, based on the
Coupling From The Past method of \citet{propp1996exact}.
The justification of the correctness of the CFTP algorithm is generally restricted to the monotonicity property of the dynamics, given by Proposition~\ref{prop:CFTP},  and to 
prove that $T_{N}^{\text{stop}}$ is almost-surely finite, but with no clue about the expected time to convergence. 
  Here, in contrast, we prove a much stronger result: not only do we upper-bound the expectation of the waiting time, but we show that CFTP compares favorably with standard MCMC forward simulation: the waiting time is (up to a logarithmic factor) not larger than the time one has to wait before a run of the chain approaches the stationary distribution. 
 Moreover, CFTP has the advantage of providing a draw from the exact stationary distribution, in contrast to MCMC which may keep for a long time a strong dependence on the initial state of the chain, as shown in the simulation example of Figure~\ref{fig:plot_edges_stars_80}.
We thus argue that for these models the CFTP algorithm is a better tool than forward MCMC simulation in order to get a sample from the Exponential Random Graph Model.

\section{Proofs}\label{sec:proofs}

\begin{proof}[Proof of Proposition \ref{prop:monotone}]
Let $\gr{x}$, $\gr{z} \in \mathcal{G}_N$ such that $\gr{x} \preceq \gr{z}$. 
For a pair of vertices $(i,j)$, we define a new count that considers only the subgraphs of $\bold{x}$ that contain the vertices $i$ and $j$, that is
\begin{equation}
N_{g_k}(\bold{x}, (i,j))=  \sum\limits_{\substack{v_{m_k} \in V_{m_k}\\ i,j \in v_{m_k}}}\mathds{1}\{\bold{x}(v_{m_k}) \succeq \bold{g}_k\}
\end{equation}
In the same way, we have that $N_{g_k}(\bold{z}_{ij}^{0}) \leq N_{g_k}(\bold{z}_{ij}^{1})$.  So, 
\begin{equation}
\begin{split}
0 \leq N_{g_k}(\bold{x}_{ij}^{1}) - N_{g_k}(\bold{x}_{ij}^{0}) = N_{g_k}(\bold{x}_{ij}^{1}, (i,j))\\
0 \leq N_{g_k}(\bold{z}_{ij}^{1}) - N_{g_k}(\bold{z}_{ij}^{0}) = N_{g_k}(\bold{z}_{ij}^{1}, (i,j))
\end{split}
\end{equation}
Since $\gr{x} \preceq \gr{z}$ we have that $N_{g_k}(\bold{x}_{ij}^{1}, (i,j)) \leq N_{g_k}(\bold{z}_{ij}^{1}, (i,j))$. For $\beta_k \geq 0$, $k=2,\cdots,s$, we have that
\begin{equation}
\begin{split}
& \sum\limits_{k=2}^s\dfrac{\beta_k}{N^{m_k-2}}\left(N_{g_k}(\bold{x}^{1}_{ij})-N_{g_k}(\bold{x}^{0}_{ij})- N_{g_k}(\bold{z}^{1}_{ij})+N_{g_k}(\bold{z}^{0}_{ij})\right) =\\
&  \sum\limits_{k=2}^s\dfrac{\beta_k}{N^{m_k-2}}\left( N_{g_k}(\bold{x}_{ij}^{1}, (i,j)) -  N_{g_k}(\bold{z}_{ij}^{1}, (i,j))  \right) \leq 0\\
\end{split}
\end{equation}
Finally,
\begin{align*}
\dfrac{p_N(\gr{y}_{ij}^{1}\, |\teta)}{p_N(\gr{y}_{ij}^{0}\, |\teta)} \dfrac{p_N(\gr{z}_{ij}^{0}\, |\teta)}{p_N(\gr{z}_{ij}^{1}\, |\teta)} \;&=\; \exp\left\lbrace \sum\limits_{k=2}^s\dfrac{\beta_k}{N^{m_k-2}}\left( N_{g_k}(\bold{x}_{ij}^{1}, (i,j)) -  N_{g_k}(\bold{z}_{ij}^{1}, (i,j))  \right) \right\rbrace\\
& \leq 1
\end{align*}
and this concludes the proof.
\end{proof}

\vspace{0.2cm}
\begin{proof}[Proof of Theorem \ref{prop:CFTP}]

Define the stopping time of the forward algorithm by 
\[
\widetilde{T}^{\text{stop}}_N=\min\{n >0 : F_{0}^n(\gr{x}^{(0)},\bold{e}^{n}_{1},\bold{u}^{n}_{1})=F_{0}^n(\gr{x}^{(1)},\bold{e}^{n}_{1},\bold{u}^{n}_{1})\}
\]
To prove the theorem claim we use the random variable $\widetilde{T}^{\text{stop}}_N$, since $\widetilde{T}^{\text{stop}}_N$ and ${T}^{\text{stop}}_N$ have the same probability distribution. In fact, 
\begin{equation}
\begin{split}
\P(T^{\rm{stop}}_N > n)&= \P(F_{-n}^0(\gr{x}^{(0)},\bold{e}_{-n}^{-1},\bold{u}_{-n}^{-1})\neq F_{-n}^0(\gr{x}^{(1)},\bold{e}_{-n}^{-1},\bold{u}_{-n}^{-1}))\\
&=\P(F_0^{n}(\gr{x}^{(0)},\bold{e}_{1}^{n},\bold{u}_{1}^{n})=F_0^{n}(\gr{x}^{(1)},\bold{e}_{1}^{n},\bold{u}_{1}^{n})=\P(\widetilde{T}^{\text{stop}}_N > n)\,.
\end{split}
\end{equation}

Let $\{Y^1_n\}_{n\in\mathbb N}$ and $\{Y^0_n\}_{n\in\mathbb N}$ be the Markov chains obtained by \eqref{forward-map} with initial states given by $\gr{x}^{(0)}$ and $\gr{x}^{(1)}$, respectively. Define $l(y)$ as the length of the longest increasing chain such that the top element is $y$. In the case $Y^0_n=Y^1_n$ we have $l(Y^0_n)=l(Y^1_n)$. Otherwise, if $Y^0_n\neq Y^1_n$ we have that $Y^0_n$ has at least one different edge of $Y^1_n$, since our algorithm use a local update. Then,  $l(Y^0_n) +1  \leq l(Y^1_n)$. So, we have that
\begin{equation}\label{eq:proof_prob}
\begin{split}
\P(\widetilde{T}^{\text{stop}}_N  >n) &= \P(Y^0_n\neq Y^1_n)=\P(l(Y^0_n) +1  \leq l(Y^1_n))\\
&\leq \E[l(Y^1_n) - l(Y^0_n)]=\vert \E_{\mathrm{p}_n^1} [l(Y)] - \E_{\mathrm{p}_n^0} [l(Y)] \vert\\
&\leq \parallel \mathrm{p}_n^1 - \mathrm{p}_n^0 \parallel_{\text{TV}}[\max\limits_{\bold{x}\in \mathcal{G}_N}l(\bold{x})-\min\limits_{\bold{x}\in \mathcal{G}_N}l(\bold{x})]  \leq \overline{d}(n)\max\limits_{\bold{x}\in \mathcal{G}_N}l(\bold{x})
\end{split}
\end{equation}
Since the update function of the transitions of the Markov chain given by \eqref{eq:update} updates only one edge at each step, we have that $\max\limits_{\bold{x}\in \mathcal{G}_N}l(\bold{x})=l(\bold{x}^{(1)})$ and $l(\bold{x}^{(1)})$ is the length of the longest increasing chain started on $\bold{x}^{(0)}$ with top element is $\bold{x}^{(1)}$, that is $l(\bold{x}^{(1)})=\frac{N(N-1)}{2}$. Thus
\[
\overline{d}(n) \;\geq\; \frac{2}{N(N-1)}\,\P(\tau^{*}_n>n)\,.
\]
Since $\P(\widetilde{T}^{\text{stop}}_N  >n)$ is submultiplicative (Theorem 6 in \cite{propp1996exact}) we have that
\begin{equation}\label{eq:proof_exp}
\E[\widetilde{T}^{\text{stop}}_N ]\leq \sum\limits_{i=1}^\infty n\P(\widetilde{T}^{\text{stop}}_N  > in) \leq \sum\limits_{i=1}^\infty n[\P(\widetilde{T}^{\text{stop}}_N  > n)]^i= \frac{n}{\P(\widetilde{T}^{\text{stop}}_N \leq n)}\,.
\end{equation}
Then, \eqref{eq:proof_prob} and \eqref{eq:proof_exp} imply that
\begin{equation}\label{eq:proof_exp1}
\E[\widetilde{T}^{\text{stop}}_N ] \;\leq\; \frac{n}{1-\overline{d}(n)\frac{N(N-1)}{2}}\,.
\end{equation}
Set $n=\left( \log( \frac{N(N-1)}{2})+1\right)T^{\rm mix}_N$. Using the submultiplicative property of $\overline{d}(n)$ (Lemma 4.12 in \cite{levin2009markov}) we
have that
\begin{equation}\label{eq:proof_exp3}
\overline{d}(n)\;\leq\; [\overline{d}(T^{\rm mix}_N)]^{ \log( \frac{N(N-1)}{2})+1}\; \leq\; \frac{2}{eN(N-1)}
\end{equation}
and by \eqref{eq:proof_exp1} and \eqref{eq:proof_exp3} we conclude that
\begin{equation*}\label{eq:proof_exp2}
\E[\widetilde{T}^{\text{stop}}_N ]\;\leq\;  \dfrac{\left( \log\left(  \frac{N(N-1)}{2}\right)+1\right)T^{\rm mix}_N }{1-\frac{1}{e}}\; \leq\; 2\left( \log\left(  \tfrac{N(N-1)}{2}\right)+1\right)T^{\rm mix}_N \,.\qedhere
\end{equation*}
\end{proof}

\section*{Acknowledgments}
A.C. is supported by a FAPESP scholarship (2015/12595-4).
F.L. is partially supported by a CNPq' fellowship (309964/2016-4). 
 This work
  was produced as part of the activities of FAPESP Research, Innovation and Dissemination Center for Neuromathematics, grant 2013/07699-0, and FAPESP's project  \emph{Structure selection for stochastic processes in high dimensions}, grant 2016/17394-0, S\~ao Paulo Research Foundation.

\bibliography{./references}

\end{document}